\setlist[enumerate]{topsep=2pt, partopsep=0pt, itemsep=2pt, parsep=0pt}
\theoremstyle{plain}
\newtheorem{theorem}{Theorem}
\newtheorem{corollary}[theorem]{Corollary}
\newtheorem{lemma}[theorem]{Lemma}
\newenvironment*{claim*}[1][]{%
  \par\vspace{0.5em}\noindent \textit{Claim\ifx\empty#1\else\textnormal{ (#1)}\fi.} \ignorespaces
}{%
  \par\vspace{0.5em}
}
\newenvironment{claimproof}[1][Proof of claim]{
  \par\noindent \textit{#1.} \ignorespaces
}{
  \hfill {\large$\triangleleft$}\par\vspace{0.5em}
}
\theoremstyle{definition}
\newtheorem{definition2}[theorem]{Definition}
\newcommand{\R}{\mathbb{R}}
\newcommand{\N}{\mathbb{N}}
\DeclareMathOperator{\OPT}{\mathtt{OPT}}
\DeclareMathOperator{\MST}{\mathtt{MST}}
\DeclareMathOperator{\E}{\mathbb{E}}
\newcommand{\A}{\mathcal{A}}
\newcommand{\X}{\mathcal{X}}
\newcommand{\T}{\mathcal{T}}
\newcommand{\Low}{\mathcal{L}}
\newcommand{\High}{\mathcal{H}}
\title{Online Metric TSP}
\author{Christian Bertram\thanks{University of Copenhagen, Denmark, \texttt{chbe@di.ku.dk}, \href{https://orcid.org/0009-0009-7940-1002}{ORCID 0009-0009-7940-1002}. The author is part of BARC, Basic Algorithms Research Copenhagen, supported by the VILLUM Foundation grant 54451.}}
\begin{document}

\maketitle

\begin{abstract}
  In the \emph{online metric traveling salesperson problem}, $n$ points of a metric space arrive one by one and have to be placed (immediately and irrevocably) into empty cells of a size-$n$ array.
  The goal is to minimize the sum of distances between consecutive points in the array.
  This problem was introduced by Abrahamsen, Bercea, Beretta, Klausen, and Kozma [ESA'24] as a generalization of the \emph{online sorting problem}, which was introduced by Aamand, Abrahamsen, Beretta, and Kleist [SODA'23] as a tool in their study of online geometric packing problems.

  Online metric TSP has been studied for a range of fixed metric spaces.
  For 1-dimensional Euclidean space, the problem is equivalent to online sorting, where an optimal competitive ratio of $\Theta(\sqrt n)$ is known.
  For $d$-dimensional Euclidean space, the best-known upper bound is $O(2^{d} \sqrt{dn\log n})$, leaving a gap to the $\Omega(\sqrt n)$ lower bound.
  Finally, for the uniform metric, where all distances are 0 or 1, the optimal competitive ratio is known to be $\Theta(\log n)$.

  We study the problem for a general metric space, presenting an algorithm with competitive ratio $O(\sqrt n)$.
  In particular, we close the gap for $d$-dimensional Euclidean space, completely removing the dependence on dimension.
  One might hope to simultaneously guarantee competitive ratio $O(\sqrt n)$ in general and $O(\log n)$ for the uniform metric, but we show that this is impossible.
\end{abstract}

\section{Introduction}%
\label{sec:introduction}

\subsection{Problem definition}

The \emph{online metric traveling salesperson problem} (online metric TSP), recently introduced by Abrahamsen, Bercea, Beretta, Klausen, and Kozma~\cite{online-tsp}, is as follows.
Given a sequence $x_{1}, \ldots, x_{n}$ of points arriving one by one (with repetitions allowed) from a metric space $(M, d)$, assign them bijectively to array cells $A[1], \ldots, A[n]$.
The goal is to minimize $\sum_{i = 1}^{n-1}d(A[i], A[i+1])$, which represents the length of the walk $A[1], \ldots, A[n]$.
The problem is \emph{online} in the sense that after receiving $x_{i}$, we must immediately and irrevocably set $A[j] = x_{i}$ for some previously unused array index $j$, without knowledge of $x_{k}$ for $k > i$.

We can think of this as a metric traveling salesperson problem, where $n$ cities are sequentially revealed, one by one, and must be placed on a unique date in the salesperson's $n$-day calendar.
The cost to be minimized is the length of the final $n$-day trip.

In~\cite{online-tsp}, the metric space is fixed as 1-dimensional Euclidean space, $d$-dimensional Euclidean space, or a space with uniform/discrete metric (where all distances are 0 or 1).
In this paper, we study the problem for a general metric space, allowing the algorithm to query the distance $d(x_{i}, x_{j})$ between $x_{i}$ and $x_{j}$, if it has received $x_{i}$ and $x_{j}$.
Alternatively, the $i$th input can be assumed to be the vector $(d(x_{i}, x_{1}), d(x_{i}, x_{2}), \ldots, d(x_{i}, x_{i-1}))$.

The term ``online TSP'' has also been used for a different, older problem where a salesperson moves through the metric space at unit speed as cities are revealed~\cite{oltsp}.
In contrast, our problem consists of constructing a fixed travel plan through irrevocable online decisions.

\subsection{Prior work}

In~\cite{online-packing}, the \emph{online sorting problem} is introduced as part of their study of various geometric translational packing problems.
The online sorting problem is equivalent to online (metric) TSP in the Euclidean unit interval $[0, 1]$.
They further essentially assume that the points $0$ and $1$ always show up in the input, which simplifies the analysis, as the optimal offline cost becomes 1.
They present a deterministic algorithm with competitive ratio $O(\sqrt n)$ for online sorting (where \emph{competitive ratio} is the worst-case ratio between the algorithm's cost and the optimal offline cost), and show an $\Omega(\sqrt n)$ lower bound for deterministic algorithms.

In~\cite{online-tsp}, the online sorting algorithm from~\cite{online-packing} is generalized to points from the real line $\R$, not necessarily including 0 and 1, by employing a careful \emph{doubling technique}.
They maintain the $O(\sqrt n)$ competitive ratio, and generalize the $\Omega(\sqrt n)$ lower bound to randomized algorithms.

In~\cite{online-tsp}, they further generalize the problem to online metric TSP, and consider this problem for a few fixed metric spaces.
For $d$-dimensional Euclidean space, they present an $O(2^{d}\sqrt{dn \log n})$ competitive ratio algorithm, leaving a gap to their $\Omega(\sqrt n)$ lower bound.
For use as a subroutine in this algorithm, they consider online TSP in the uniform metric, i.e.\ the metric where all distances are 0 or 1.
They argue that this problem has interest in its own right, corresponding to minimizing the number of task switches in scheduling, and present a tight $\Theta(\log n)$ bound on the competitive ratio.
Finally, they ask, as an open question, whether $O(\sqrt n)$ is the optimal competitive ratio for arbitrary metrics.

\subsection{Our results}

Our main result is an optimal algorithm for online metric TSP in a general metric space, settling a question posed in~\cite{online-tsp}.

\begin{theorem}\label{sqrt-n-alg}
  There exists a deterministic algorithm for online metric TSP with competitive ratio $O(\sqrt n)$.
\end{theorem}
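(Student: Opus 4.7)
The plan is to follow the template of the 1-dimensional algorithm in~\cite{online-tsp}: pair an \emph{estimate-based} algorithm, which assumes knowledge of a value $L$ satisfying $\OPT \le L = O(\OPT)$, with a doubling layer that produces such an $L$ online at the cost of only a constant factor.

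For the estimate-based subproblem, I would partition the array into $\sqrt n$ contiguous \emph{blocks} of size $\sqrt n$, each associated with a \emph{representative} $r_i$---the first point placed in that block---and maintain the invariant that distinct representatives lie at pairwise distance greater than $L/\sqrt n$. When a new point $x$ arrives, I query the distances from $x$ to all current representatives (allowed in the oracle model), locate the nearest $r_i$, and proceed as follows: if $d(x, r_i) \le L/\sqrt n$ and $B_i$ still has room, place $x$ in $B_i$ next to an existing point there; otherwise open a new block with $x$ as its representative and insert it in the array directly adjacent to $B_i$, so that consecutive blocks have close representatives.

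The cost analysis splits in two. Intra-block: all $\sqrt n$ points in any $B_i$ lie within $L/\sqrt n$ of $r_i$, so a walk through $B_i$ has length $O(L)$, summing to $O(L\sqrt n)$ over all blocks. Inter-block: the representatives form an $(L/\sqrt n)$-separated set, so $\MST \le 2\OPT \le 2L$ forces their count to be $O(\sqrt n)$; moreover, the array-adjacent placement rule means that the order of blocks in the array corresponds to a DFS-style traversal of a spanning tree of the representatives, so each tree edge is charged $O(1)$ times and the total inter-block transition cost is $O(L\sqrt n)$ (including the $L/\sqrt n$ slack at each block boundary). The overall cost is $O(L\sqrt n) = O(\OPT \sqrt n)$.

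The doubling layer would start from a small guess $L_0$, e.g.\ the smallest pairwise distance seen, and double $L$ whenever an invariant would be violated (too many representatives, or a block would overflow). On doubling, old blocks are reinterpreted at the coarser scale by pairwise merging, which preserves the intra-block distance invariant up to a constant factor because old blocks had diameter $\le 2L/\sqrt n$. The main obstacle I expect is precisely this doubling step: the merging must preserve the array-adjacency structure that underlies the inter-block bound, the intra-block invariant must survive the rescaling, and the costs across phases must telescope so that the last phase---when $L = \Theta(\OPT)$---dominates up to a geometric factor.
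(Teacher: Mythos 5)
Your high-level ingredients do mirror the paper's algorithm: $\Theta(\sqrt n)$ blocks of size $\Theta(\sqrt n)$, a set of pairwise $(L/\sqrt n)$-separated representatives whose number is bounded by $O(\sqrt n)$ through the MST (this is exactly the role of \Cref{mst-subset,minimal-covering-size}), and a doubling/reset layer to cope with not knowing $\OPT$. But there is a genuine gap in how you handle block capacity. Nothing prevents far more than $\sqrt n$ input points from falling into a single ball of radius $L/\sqrt n$ (think of almost all points forming one tiny cluster), so a block overflows even when $L=\Theta(\OPT)$, and both of your remedies fail: opening a new block with $x$ as its representative destroys the separation invariant, and with it the $O(\sqrt n)$ bound on the number of blocks; doubling $L$ on overflow destroys the guarantee $L=O(\OPT)$ (overflow certifies nothing about $\OPT$ being large), and with it the intra-block cost bound. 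The paper resolves precisely this by decoupling blocks from centers --- $2\lfloor\sqrt n\rfloor$ blocks but only $\lfloor\sqrt n\rfloor$ centers, so a center whose block fills up simply abandons it and grabs a fresh unassigned block --- by feeding only the first $\lceil n/2\rceil$ points to this routine (a counting argument then shows a free block always exists), and by recursing on the remaining empty cells viewed as one contiguous array, paying $O(\OPT)$ per leftover gap (at most one per block). Some such ``fill half, then recurse on the empty cells'' mechanism, or equivalent slack, is essential and is absent from your plan.

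Two further points. Your inter-block bound relies on inserting each newly opened block ``directly adjacent to $B_i$'' so that the array order realizes a DFS of a spanning tree of representatives; in a fixed linear array this is not generally feasible, since both array-neighbours of $B_i$ may already be occupied (a representative can acquire many tree-children but has only two array sides). This machinery is also unnecessary: any two input points are at distance at most $\OPT$, so the $O(\sqrt n)$ transitions between consecutive blocks trivially cost $O(\sqrt n\,\OPT)$, which is all the paper uses. Likewise, the ``pairwise merging'' of blocks at a doubling step cannot be realized inside the array (the blocks to be merged need not be adjacent); the paper instead unassigns all blocks and resets the covering with radius $r=4\MST(X')/N_1$, paying at most one expensive adjacency per block per reset, and these costs telescope because the reset rule forces $\MST$ of the seen points to at least double between resets. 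Your reset triggered by ``too many representatives'' could be analysed similarly, but as stated your doubling layer mixes in the overflow trigger, for which no such certificate of growth exists.
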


This is optimal by the $\Omega(\sqrt n)$ lower bound for the Euclidean unit interval~\cite{online-packing}, which holds even for randomized algorithms~\cite{online-tsp}.
As a direct corollary, we get an optimal algorithm for online TSP in $d$-dimensional Euclidean space, improving upon the best-known $O(2^{d}\sqrt{dn \log n})$ competitive ratio algorithm~\cite{online-tsp}, completely removing the dependence on dimension.

\begin{corollary}\label{sqrt-n-alg-euclidean}
  There exists a deterministic algorithm for online TSP in $d$-dimensional Euclidean space with competitive ratio $O(\sqrt n)$.
\end{corollary}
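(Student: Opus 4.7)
The plan is essentially to observe that \Cref{sqrt-n-alg-euclidean} is an immediate specialization of \Cref{sqrt-n-alg}: $d$-dimensional Euclidean space, equipped with the usual Euclidean distance, is a metric space, so the general-purpose algorithm guaranteed by \Cref{sqrt-n-alg} applies verbatim and yields competitive ratio $O(\sqrt n)$, with no term depending on $d$.

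To make this rigorous I would first verify that the input/access model assumed by \Cref{sqrt-n-alg} is available here. The general-metric algorithm only needs the ability to compute $d(x_i, x_j)$ once both points have arrived; in $\R^d$ with the Euclidean norm this is trivially computable from the coordinate representation of the input, so the algorithm of \Cref{sqrt-n-alg} can be run as a black box on the sequence $x_1, \ldots, x_n \in \R^d$. The competitive-ratio guarantee compares the algorithm's cost to the offline optimum on the specific metric instance, so instantiating the metric as Euclidean gives exactly the statement of the corollary.

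The only conceivable obstacle would be if the bound hidden by the $O(\cdot)$ in \Cref{sqrt-n-alg} secretly depended on properties of the metric space (for instance, on its doubling dimension, diameter, or cardinality), which would then reintroduce a dependence on $d$. Since \Cref{sqrt-n-alg} is stated uniformly over all metric spaces, no such dependence is present, and the reduction is genuinely a one-line argument. I would therefore write the proof as a single sentence invoking \Cref{sqrt-n-alg} with $(M, d)$ taken to be $(\R^d, \lVert \cdot \rVert_2)$.
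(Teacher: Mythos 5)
Your proposal is correct and matches the paper exactly: the corollary is obtained as an immediate specialization of \Cref{sqrt-n-alg} (via \Cref{full-alg}) to the metric space $(\R^d, \lVert\cdot\rVert_2)$, with the dimension-independence following because the general bound holds uniformly over all metric spaces.
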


Note, though, that the $O(\sqrt n)$ bound of \Cref{sqrt-n-alg} does not match the known $\Theta(\log n)$ bound for online TSP with uniform metric~\cite{online-tsp}.
One might hope for an even stronger algorithm, obtaining an asymptotically optimal competitive ratio for every fixed metric space.
We show that no such algorithm exists, hinting that our algorithm is the best one can hope for.

\begin{restatable}{theorem}{impossibilityResult}\label{no-log-n-and-sqrt-n}
  No randomized algorithm for online metric TSP obtains both $O(\log n)$ expected competitive ratio for the uniform metric and $O(\sqrt n)$ expected competitive ratio in general.
\end{restatable}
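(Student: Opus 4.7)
The plan is a proof by contradiction via Yao's principle: assuming some randomized algorithm $\mathcal{A}$ attains both expected ratios, I would exhibit a distribution over inputs on which $\mathcal{A}$ must violate one of the two bounds in expectation. The core of the construction is a coupled pair of inputs $(U,G)$, where $U$ lies in the uniform metric and $G$ is a general-metric input, designed so that the observed pairwise distances agree throughout a long common prefix; this forces $\mathcal{A}$'s placements on the prefix to be identically distributed in the two cases.

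For the coupling, the prefix consists of arrivals that are identical copies of one of a few ``cluster centers'' placed at mutual distance $1$. The suffix diverges: $U$ continues to produce more identical copies (keeping the instance uniform-metric), while $G$ produces a sorting-instance suffix whose values live in a small $\epsilon$-range near one of the cluster centers. Applying the $O(\log n)$ uniform ratio to $U$, together with the low value of $\OPT(U)$ (essentially the number of clusters), forces $\mathcal{A}$'s final array on $U$ to exhibit $O(\log n)$ cluster transitions, which in turn pins down the structure of the empty cells at the end of the prefix: they must split into only $O(\log n)$ contiguous sub-arrays, since otherwise the uniform-suffix placements would incur too many transitions.

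The final step is to argue that on $G$ this fixed sub-array structure is costly: the sorting-instance suffix has to be placed into these same $O(\log n)$ sub-arrays, and I would invoke the online-sorting lower bound of~\cite{online-packing} within each sub-array to deduce that $\mathcal{A}$'s expected cost on $G$ is $\omega(\sqrt{n}) \cdot \OPT(G)$, contradicting the general-ratio assumption. I expect this final step to be the main obstacle, because the algorithm can try to \emph{bucket} sort values across the available sub-arrays: partitioning the sort range into $b$ buckets mapped to $b$ sub-arrays would reduce the online-sorting cost to $O(\epsilon\sqrt{n/b})$, and with $b = \Theta(\log n)$ this is only $O(\sqrt{n/\log n}) = o(\sqrt{n})$. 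To block this, I would strengthen the construction with multiple levels of cluster structure, so that the $O(\log n)$ sub-array budget is forced to be spent at several scales at once; this prevents effective bucketing at the finest scale, where the online-sorting lower bound can then be applied directly. Calibrating the number of levels and the scale parameter $\epsilon$ so that neither bucketing nor non-bucketing strategies escape the $\omega(\sqrt{n})$ ratio is, I believe, the technical heart of the argument.
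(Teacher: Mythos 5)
Your proposal is a sketch with the hardest step openly unresolved, and one earlier step does not hold as stated, so there is a genuine gap. First, the inference that the $O(\log n)$ uniform guarantee pins the empty cells at the end of the common prefix into $O(\log n)$ contiguous sub-arrays is not justified: in the uniform metric the suffix of $U$ consists of further copies of the same few cluster centers, so the algorithm can fill arbitrarily many gaps at zero marginal cost by matching each gap with copies of the cluster value bordering it. A large number of gaps therefore does not translate into a large number of cluster transitions, and nothing in the uniform bound forces the gap structure you need. Second, even granting that structure, you yourself identify that the conclusion on $G$ fails against bucketing, and the proposed remedy (a multi-level cluster hierarchy with calibrated $\epsilon$ and level counts) is only a hope; the ``technical heart'' is exactly what is missing, so the contradiction with the $O(\sqrt n)$ general ratio is never established.

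It is worth noting that the paper's proof avoids both difficulties by not using a sorting-type instance at a small scale at all. Its adversary serves repeated copies of a set $U$ of $n^{4/5}$ uniformly spaced points and, after each epoch, switches with probability $n^{-3/5}$ to serving only copies of a single point $x$ at distance $n^{4/5}$ from all of $U$. The gap/attachment trade-off (adapted from the known $\Omega(\sqrt n)$ lower-bound proof) shows every deterministic algorithm incurs expected cost $\Omega(n)$ on this random input: few gaps force expensive attachments within each epoch, while many gaps become catastrophically expensive ($G(T)\cdot n^{4/5}$) if the switch to $x$ occurs, and the random stopping time prevents the algorithm from timing its gaps. The contradiction then comes from a scale-and-probability calculation: $x$ appears with probability at most $n^{-2/5}$, so the $O(\sqrt n)$ general guarantee contributes only $O(n^{-2/5}\cdot n^{4/5}\sqrt n)=O(n^{9/10})$ and the uniform guarantee only $O(n^{4/5}\log n)$, together $o(n)$. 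In particular no per-sub-array online-sorting lower bound (and hence no defense against bucketing) is needed; if you want to rescue your coupling approach, you would need to replace the small-$\epsilon$ sorting suffix by something whose cost per gap is large at a scale the $O(\sqrt n)$ guarantee cannot absorb, which is essentially what the far point $x$ accomplishes.
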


\subsection{Structure of the paper}

In~\Cref{sec:preliminaries}, we recall relevant definitions and fix notation.
In~\Cref{sec:our-algorithm}, we present our algorithm, proving \Cref{sqrt-n-alg,sqrt-n-alg-euclidean}.
In~\Cref{sec:optimality}, we confirm the worst-case optimality of our algorithm via known lower bounds, and then we show that no algorithm is simultaneously optimal for every fixed metric, proving \Cref{no-log-n-and-sqrt-n}.

\section{Preliminaries}%
\label{sec:preliminaries}

We will often let $(M, d)$ be a metric space, where $M$ is the set of points, and $d \colon M \times M \to \R_{\geq 0}$ is the metric.
Our main examples are $d$-dimensional Euclidean space and the uniform metric.

The \emph{uniform metric} is defined by $d(x, y) = 1$ for all $x \neq y$.
This is also known as the \emph{discrete metric}, but we adopt the former term for consistency with~\cite{online-tsp}.
Any set of points can form a metric space under the uniform metric.

Though our algorithms run in polynomial time, we do not focus on their exact running times.
Instead, we evaluate performance via \emph{competitive analysis}, as is standard in the study of online algorithms.
Let $\OPT(X)$ denote the optimal offline cost for an input sequence $X \in M^{n}$.
An online algorithm is said to have \emph{competitive ratio} $C(n)$, for a function $C \colon \N \to \R$, if its cost is at most $C(n)\OPT(X)$ for all $n \in \N$ and all $X \in M^{n}$.

In our setting, the cost is $\sum_{i=1}^{n-1}d(A[i], A[i+1])$, and $\OPT(X)$ is the minimum value of this sum, minimized over all bijections from the input points in $X$ to array cells $A[1],\ldots,A[n]$.
By the triangle inequality, $\OPT(X)$ is equal to the length of a shortest walk visiting all points in $X$.
We will sometimes write $\OPT(X)$ for a subset $X \subseteq M$, as the optimal offline cost is invariant under permutations and repetitions of the input points.

\section{Our algorithm}%
\label{sec:our-algorithm}

Our algorithm is based on the algorithms for online sorting presented in~\cite{online-packing} and~\cite{online-tsp}.
Let us first give a high-level explanation of their approach.
Denote by $X'$ the set of currently received points.
They consider an interval $J$ containing $X'$, and partition $J$ into $\sqrt n$ subintervals.
Specifically, they assume without loss of generality that $x_{1} = 0$, and let $J$ be the smallest interval of the form $[-2^{k},2^{k}]$ containing $X'$, increasing $k$ as necessary.
Note that $\OPT(X') \geq 2^{k-1}$, so the subintervals have size $O(\OPT(X')/\sqrt{n})$.
Similarly, they partition the array $A$ into $2\sqrt n$ subarrays, called $\emph{blocks}$.
The algorithm generally places the input points into the array, such that no block contains points from different subintervals.
This ensures, that they only pay a small $O(\OPT(X')/\sqrt{n})$ cost between points inside each block, and pay a large $O(\OPT(X'))$ cost only between blocks.
The former cost is paid $O(n)$ times, and the latter $O(\sqrt n)$ times, totaling a cost of $O(\sqrt n\OPT(X'))$.
It is always possible to place the first half of the points in this manner, after which they consider the remaining empty array cells as one contiguous array, and recursively fill this simulated array with the remaining input points.
This leads to a total cost of $O(\sqrt n\OPT(X))$ as the cost per recursion falls geometrically.
Each time $k$ is increased, we incur some additional cost of changing the intervals, but this only happens when the bound on $\OPT(X')$ doubles, so again it all sums to $O(\sqrt n\OPT(X))$.

Let us now give a high-level explanation of how we will extend this algorithm from the Euclidean line to a general metric space.
There are two main problems to solve: the setting of $J=[-2^{k}, 2^{k}]$ and bound of $\OPT(X') \geq 2^{k-1}$ no longer make sense.
One might naively try letting $J$ be a ball of radius $2^{k}$, but this cannot generally be covered by $\sqrt n$ balls of radius $O(2^{k}/\sqrt n)$.
To see this, think of $d$-dimensional Euclidean space, where $2^{d}$ unit cubes are required to cover a cube of sidelength 2.
This is related to why there is a term of size $2^{d}$ in the best-known bound for $d$-dimensional Euclidean space~\cite{online-tsp}.
We instead present an online algorithm maintaining a set of at most $\sqrt n$ balls of radius $r=\Theta(\OPT(X')/\sqrt{n})$ covering all received input points.
Only when $\OPT(X')$ doubles, we allow changing the radius $r$ and resetting the set.
Of course, we don't know the value of $\OPT(X')$, but it turns out that we can successfully use the minimum spanning tree weight as a (polynomial-time computable) proxy.
This proxy will be essential when showing that our covering only needs $\sqrt n$ balls.

We begin the formal presentation of our algorithm with the above mentioned proxy.

\begin{definition2}
  Let $(M, d)$ be a metric space and $X \subseteq M$ a finite subset of points.
  When the metric $d$ is clear from context, we denote by $\MST(X)$ the total weight of a minimum spanning tree of the complete graph on $X$, where edge weights are given by $d$.
\end{definition2}

Computing exact metric TSP length is NP-hard, as follows by a straightforward reduction from the undirected Hamiltonian cycle problem, one of Karp's 21 NP-complete problems~\cite{karp}.
Luckily, there is a simple 2-approximation based on minimum spanning trees~\cite{2-approx}.
This well-known argument yields the following \lcnamecref{2-approx}, which tells us that $\MST(X)$ is a good proxy of $\OPT(X)$.
We provide a proof in~\Cref{app:proxy-details}.

\begin{restatable}{lemma}{proxy}\label{2-approx}
  Let $(M, d)$ be a metric space and $X \subseteq M$ a finite subset of points.
  Then $\MST(X) \leq \OPT(X) \leq 2\MST(X)$.
\end{restatable}

As sketched, our algorithm will cover the input points by a dynamic set of balls.
Formally, we will work with a set of \emph{centers} forming a \emph{net} in the following standard manner.

\begin{definition2}
  Let $(M, d)$ be a metric space and $X \subseteq M$ a set of points.
  Let $r \geq 0$ be a real number.
  A subset $C \subseteq X$ is an \emph{$r$-net} of $X$ if
  \begin{itemize}[topsep=1pt,itemsep=1pt,parsep=1pt]
    \item for every $x \in X$, there exists a $c \in C$ with $d(c, x) \leq r$, and
    \item for every $c, c' \in C$ with $c \neq c'$, we have $d(c, c') > r$.
  \end{itemize}
  If so, we sometimes refer to each point in $C$ as a \emph{center}, and to $r$ as the \emph{radius} of the net.
\end{definition2}

Nets will be useful to us, because they are small in terms of our proxy, in the sense of the following \lcnamecref{net-size}.

\begin{lemma}\label{net-size}
  Let $(M, d)$ be a metric space, in which $C$ is an $r$-net of a set $X \subseteq M$.
  Then $(|C|-1)r \leq 2\MST(X)$.
\end{lemma}
\begin{proof}
  By the triangle inequality, we have $\OPT(X) \geq \OPT(C)$.
  Combining this with~\Cref{2-approx}, we get $2\MST(X) \geq \OPT(X) \geq \OPT(C) \geq \MST(C)$.
  Every edge in the complete graph on $C$ has length at least $r$, and a tree on $|C|$ points has $|C|-1$ edges, so $\MST(C) \geq (|C|-1)r$.
\end{proof}

When receiving a new point $x$, we will generally update our $r$-net by the following simple subroutine.

\begin{tcolorbox}[beforeafter skip=10pt]
  \textbf{Increase-Net($C, r, x$)}
  \begin{enumerate}
    \item If no $c \in C$ has $d(x, c) \leq r$ then add $x$ to $C$.
  \end{enumerate}
\end{tcolorbox}

This subroutine allows us to maintain an $r$-net, for a fixed radius $r$.
We state this as the following~\lcnamecref{increase-net}, which follows directly from the definition.

\begin{lemma}\label{increase-net}
  Let $(M, d)$ be a metric space, in which $C$ is an $r$-net of a set $X \subseteq M$, and let $x \in X$.
  Then running \textnormal{Increase-Net$(C, r, x)$} modifies $C$ only by insertion, such that $C$ becomes an $r$-net of $X\cup\{x\}$.
\end{lemma}

To facilitate analysis, let us introduce two pieces of notation.
For a partially filled array $A$, we write $c(A) = \sum_{i \in I}d(A[i], A[i+1])$ for the \emph{cost} of $A$, where $I$ is the set of indices $i$ for which both $A[i]$ and $A[i+1]$ are non-empty.
Note that this matches the usual cost, when $A$ is full.
Secondly, we write $G(A)$ for the number of \emph{gaps} in $A$, i.e.\ the number of non-trivial maximal contiguous empty subarrays of $A$.

We now present ``half'' of our algorithm, namely an algorithm handling the first $\lceil n/2 \rceil$ input points.
First, we give a quick overview.
The algorithm will partition the array $A$ into $2\lfloor\sqrt n\rfloor$ blocks, and maintain an $r$-net $C$ with at most $\lfloor\sqrt n\rfloor$ centers.
If we get too many centers, we will reset the net with a larger radius.
Each block may be \emph{assigned} to a center, such that every center is assigned at most one block.
Initially, every block is unassigned.
Generally, if a newly received point lies within the ball of a center, it will be placed inside the block assigned to that center.
The exact algorithm is given below as Fill-Most-Blocks.

\begin{lemma}\label{half-alg}
  Let $(M, d)$ be a metric space and $X$ be an online stream of $\lceil n/2 \rceil$ points in $M$.
  Let $A$ be an empty array of length $n$.
  The deterministic algorithm \textnormal{Fill-Most-Blocks$(n, A, X)$} irrevocably places each point from $X$ in an empty cell of $A$, such that when all points have been placed, we have $G(A) \leq 2\sqrt n$ and $c(A) \leq 11\sqrt{n}\OPT(X)$.
\end{lemma}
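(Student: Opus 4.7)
My plan is to verify three properties: well-definedness (step \ref{alg-step:assign-block} always succeeds), $G(A) \leq 2\sqrt n$, and $c(A) \leq 11\sqrt n \OPT(X)$.

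For well-definedness, at step \ref{alg-step:assign-block} we have $|C| \leq N_1$, so at most $N_1$ blocks are assigned (one per center). Since only $\leq \lceil n/2 \rceil$ points have been placed and each block has size $\geq \lfloor n/N_2 \rfloor$, a counting argument shows that the number of blocks that are either full or assigned but non-full stays strictly below $N_2 = 2 N_1$, leaving at least one unassigned non-full block. For the gap bound, step \ref{alg-step:place-point} always fills the leftmost empty cell of a block, so each block's filled cells form a prefix. The right boundary of any maximal gap must therefore be either the array's end or the first cell of some block (otherwise the cell immediately before it would also be filled, violating maximality). Since $A[1]$ cannot serve as such a right boundary, $G(A) \leq N_2 = 2\lfloor \sqrt n \rfloor \leq 2\sqrt n$.

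For the cost bound, I would partition the execution into \emph{epochs} separated by the resets at step \ref{alg-step:if-large-covering}, with radii $r_1 < r_2 < \cdots < r_k$. Two observations drive the analysis. First, the radii grow geometrically: at each reset we have $|C| > N_1$, so by \Cref{minimal-covering-size}, $\MST(X') \geq N_1 r_j / 2$, hence $r_{j+1} = 4\MST(X')/N_1 \geq 2 r_j$; combined with $r_k \leq 4 \OPT(X)/N_1 = O(\OPT(X)/\sqrt n)$ via \Cref{mst-subset} and \Cref{2-approx}, this gives $\sum_j r_j = O(r_k)$. Second, for any $x, y \in X'$, $d(x, y) \leq \MST(X')$, because the $x$-to-$y$ path in the MST has weight at most $\MST(X')$.

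I would then decompose $c(A)$ into three kinds of consecutive non-empty pairs and bound each by $O(\sqrt n \OPT(X))$. \emph{Inter-block pairs} at adjacent block boundaries: at most $N_2 - 1$ such pairs, each of cost $\leq \MST(X) \leq \OPT(X)$. \emph{Intra-block intra-epoch pairs}: within one epoch a block is assigned to at most one center (once full it is not reassigned inside the epoch), so each such pair has cost $\leq 2 r_j$, giving a total of $\sum_j 2 r_j n_j \leq 2 n r_k$. \emph{Intra-block inter-epoch pairs}: for a transition at some block between an earlier-placed point $p_a$ and an epoch-$j_b$ point $p_b$ with $j_b < k$, both lie in $X'$ just before the reset to epoch $j_b+1$, so $d(p_a, p_b) \leq \MST(X'_{j_b+1}) = N_1 r_{j_b+1}/4$; with at most $N_2$ such transitions per epoch, this sums to $O(N_1^2 r_k)$. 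The last epoch is handled by $d(p_a, p_b) \leq \MST(X) \leq \OPT(X)$, contributing $\leq N_2 \OPT(X)$.

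The main obstacle is the intra-block inter-epoch part: a block may accumulate points from many epochs through repeated reassignments, and bounding each transition by the diameter $\OPT(X)$ of $X$ with $O(k\sqrt n)$ transitions would give $O(k \sqrt n \OPT(X))$, which is too large since $k$ may be $\omega(1)$. The key trick is the MST-distance inequality, which ties each transition's cost to the \emph{next} epoch's radius via $\MST(X'_{j_b+1}) = N_1 r_{j_b+1}/4$, letting the total telescope against the geometric series of radii.
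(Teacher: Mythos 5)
Your route is essentially the paper's: leftmost-empty placement gives at most one gap per block, inter-block boundary pairs are charged $\OPT(X)$ each, same-epoch intra-block pairs are charged $2r$, and cross-reset intra-block pairs occur at most once per block per epoch and are summed as a geometric series. Phrasing the series through the radii (via $2\MST(X')\geq(|C|-1)r_j\geq N_1r_j$, hence $r_{j+1}\geq 2r_j$, and $r_k\leq 4\OPT(X)/N_1$) instead of through the $\MST$ values at the resets is only cosmetic, and your key observations (a block is served by a single center within an epoch; only the first point a block receives in an epoch can neighbor an older point; $d(p_a,p_b)\leq\MST(S)$ for any read-set $S$ containing both) are all valid.

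Two points do not close as written. First, the well-definedness of step~\labelcref{alg-step:assign-block}: the counting you gesture at (at most $\lceil n/2\rceil$ points placed, blocks of length at least $\lfloor n/N_2\rfloor$, at most $N_1$ assigned blocks) does not by itself rule out that every block is full or assigned. Because of the floor, $N_1\lfloor n/N_2\rfloor$ can be noticeably below $n/2$; e.g.\ for $n=110$ one has $N_1=10$, $\lfloor n/N_2\rfloor=5$, so $10$ full blocks need only $50\leq 54$ placed points, and together with $10$ assigned blocks this could a priori exhaust all $N_2=20$ blocks. You additionally need the observation that every assigned block already contains at least one point (a block is assigned only immediately before step~\labelcref{alg-step:place-point} fills it), which restores the count $N_1\lfloor n/N_2\rfloor+N_1\geq n/2>$ number of placed points and gives the contradiction. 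Second, your bookkeeping does not reach the stated constant: with the boundary term $\leq N_2\OPT(X)\approx 2\sqrt n\,\OPT(X)$, the same-epoch term $\leq 2nr_k\leq 8(n/N_1)\OPT(X)\approx 8\sqrt n\,\OPT(X)$ (about $4\sqrt n\,\OPT(X)$ if you use $\sum_j n_j\leq\lceil n/2\rceil$), and the cross-reset term $\leq N_2N_1r_k/2+N_2\OPT(X)\approx 6\sqrt n\,\OPT(X)$, you land somewhere around $12\sqrt n$ to $16\sqrt n$ times $\OPT(X)$ rather than $11\sqrt n\,\OPT(X)$. This is immaterial for \Cref{sqrt-n-alg} (it only changes the constant in \Cref{full-alg}), but it does fall short of the lemma as literally stated; the paper's slightly tighter accounting $N_2+4n/N_1+2N_2$ (treating termination as the final reset checkpoint, so the cross-reset series sums to $2N_2\MST(X)$ with no separate last-epoch term) is what yields $11$.
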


\begin{tcolorbox}[beforeafter skip=10pt]
  \textbf{Fill-Most-Blocks($n, A, X$)}
  \begin{enumerate}
    \item Let $N_{1} = \lfloor \sqrt n \rfloor$ and $N_{2} = 2N_{1}$.
    \item Partition $A$ into $N_{2}$ subarrays, called \emph{blocks}, of length at least $\lfloor n/N_{2} \rfloor$.
    \item\label{alg-step:init-net} Initialize $r = 0$ and $C = \emptyset$.
    \item For each point $x$ in the stream $X$:
          \begin{enumerate}
            \item\label{alg-step:increase-net} Increase-Net$(C, r, x)$.
                  \item\label{alg-step:if-large-net} If $|C| > N_1$:
                  \begin{enumerate}
                    \item\label{alg-step:unassign-all} Unassign all blocks.
                    \item\label{alg-step:set-radius} Set $r = 4\MST(X')/N_{1}$, where $X'$ is the set of known input points.
                    \item\label{alg-step:empty-net} Set $C = \{x\}$.
                  \end{enumerate}
            \item Let $c \in C$ be a center with $d(x, c) \leq r$.
            \item If a full block $B$ is assigned to $c$:
                  \begin{enumerate}
                    \item\label{alg-step:unassign-block} Unassign $B$ from $c$.
                  \end{enumerate}
            \item If no block is assigned to $c$:
                  \begin{enumerate}
                    \item\label{alg-step:assign-block} Assign an unassigned non-full block to $c$.
                  \end{enumerate}
            \item Let $B$ be the block assigned to $c$.
                  \item\label{alg-step:place-point} Place $x$ in the left-most empty cell of $B$.
          \end{enumerate}
  \end{enumerate}
\end{tcolorbox}

\begin{proof}
  The following two claims show that the algorithm is well-defined.

  \begin{claim*}
    After step~\labelcref{alg-step:if-large-net}, $C$ is an $r$-net of a set containing $x$.
  \end{claim*}
  \begin{claimproof}
    Clearly, the claim holds if step~\labelcref{alg-step:empty-net} just ran, so let us show, that it also holds before and between these resets of the net.
    This follows by induction, where the base case is step~\labelcref{alg-step:init-net} producing an $r$-net, and the induction step is step~\labelcref{alg-step:increase-net} with~\Cref{increase-net}.
  \end{claimproof}

  \begin{claim*}
    In step~\labelcref{alg-step:assign-block}, there is always an unassigned non-full block to assign.
  \end{claim*}
  \begin{claimproof}
    Assume for the sake of contradiction, that no unassigned non-full block exists.
    Then every block is either assigned or full.
    There are at most $|C| \leq N_{1}$ assigned blocks, so at least $N_{2}-|C| \geq N_{2}-N_{1}=N_{1}$ blocks are full.
    Note that every assigned block contains at least one point, as right after assigning a block in step~\labelcref{alg-step:assign-block}, it is given a point in step~\labelcref{alg-step:place-point}.
    So at least $N_{1}$ blocks contain at least $\lfloor n/N_{2} \rfloor$ points, and the remaining blocks contain at least one point.
    This means that the total number of points in $A$ is at least $N_{1} \lfloor n/N_{2} \rfloor + N_{1} \geq N_{1}n/N_{2} = n/2$.
    This contradicts the stream containing only $\lceil n/2 \rceil$ points, as in step~\labelcref{alg-step:assign-block} the point $x$ has yet to be placed.
  \end{claimproof}

  The following two claims show that the algorithm is correct.

  \begin{claim*}
    At termination, $G(A) \leq 2\sqrt{n}$.
  \end{claim*}
  \begin{claimproof}
    Since every point is placed in the left-most empty cell of a block, there is at most one gap per block, so $G(A) \leq N_{2} \leq 2\sqrt n$.
  \end{claimproof}

  \begin{claim*}
    At termination, $c(A) \leq 11\sqrt{n}\OPT(X)$.
  \end{claim*}
  \begin{claimproof}
    The cost between two neighboring blocks is at most $\OPT(X)$, so the total cost between neighboring blocks is at most $N_2\OPT(X)$.
    It remains to bound the cost between neighboring cells inside a block.
    So let $x$ and $x'$ be points placed in neighboring cells inside a block $B$.

    Consider first the case where the net was not reset (steps~\labelcref{alg-step:unassign-all} to~\labelcref{alg-step:empty-net}) between $x$ and $x'$ being placed.
    Then $B$ was assigned to the same center $c$ with the same radius $r$ both when $x$ and $x'$ were placed.
    Let $X'$ be the set of points read right after both $x$ and $x'$ were placed.
    By the triangle inequality, $\OPT(X') \leq \OPT(X)$, so by~\cref{2-approx}, $d(x, x') \leq d(x, c) + d(c, x') \leq 2r = 8\MST(X')/N_{1} \leq 8\OPT(X')/N_{1} \leq 8\OPT(X)/N_{1}$.
    Since at most $\lceil n/2 \rceil$ points are placed, the total cost between such pairs of points is at most $(\lceil n/2 \rceil - 1)8\OPT(X)/N_{1} \leq 4n\OPT(X)/N_{1}$.

    Consider now the case where the net was reset between $x$ and $x'$ being placed.
    Then $d(x, x') \leq \MST(X')$ where $X'$ is the set of points read at any point after placing both $x$ and $x'$.
    This is a worse bound, but this case can only appear once per block per reset of the net.
    So the total cost for such pairs of points is at most $N_{2}\MST(X')$ for each reset, where $X'$ is the set of points read when resetting.
    Let $X_{1}$ be the set of points read at the point of a reset, or at the beginning of the algorithm, and $X_{2}$ be the set of points read at the point of the following reset or when the algorithm terminates.
    Right before the latter reset, $|C| > N_{1}$ and $r = 4\MST(X_{1})/N_{1}$.
    This is true even if $X_{1}$ is the set at the beginning of the algorithm, as then $X_{1}=\emptyset$ and $r=0$.
    So~\Cref{net-size} gives us that $2\MST(X_{2}) \geq (|C|-1)r \geq N_{1}4\MST(X_{1})/N_{1} = 4\MST(X_{1})$, which simplifies to $\MST(X_{2}) \geq 2\MST(X_{1})$.
    The total cost of this type across all rebuilds is thus no more than the geometric series $N_{2}\sum_{i=0}2^{-i}\MST(X) = 2N_{2}\MST(X) \leq 2N_{2}\OPT(X)$, where the last inequality uses~\Cref{2-approx}.

    The total cost $c(A)$ at termination is thus at most $(N_{2} + 4n/N_{1} + 2N_{2})\OPT(X)$.
    It can be checked that $N_{2}+4n/N_{1}+2N_{2} \leq 11\sqrt{n}$, finishing the proof.
  \end{claimproof}

  The combination of the above claims finishes the proof.
\end{proof}

We are now ready to present our full algorithm, which recursively applies Fill-Most-Blocks, analogously to~\cite{online-packing,online-tsp}.
The algorithm begins by placing the first half of the input points into the array $A$ using Fill-Most-Blocks.
It then treats the remaining empty cells of $A$ as a contiguous array $A_{\mathrm{empty}}$.
The algorithm proceeds recursively on $A_{\mathrm{empty}}$, treating it as a standard array; however, when a point is placed in $A_{\mathrm{empty}}$, it is actually placed into the corresponding cell of $A$.
The exact algorithm is as follows.

\begin{tcolorbox}[beforeafter skip=10pt]
  \textbf{Recursively-Fill-Most-Blocks($n, A, X$)}
  \begin{enumerate}
    \item If $n = 0$ then return.
    \item Let $X_{\mathrm{prefix}}$ be the stream consisting the first $\lceil n/2 \rceil$ points of $X$.
          \item\label{alg-step:fill-half} Fill-Most-Blocks$(n, A, X_{\mathrm{prefix}})$.
    \item Consider the empty cells of $A$ as one contiguous array $A_{\mathrm{empty}}$.
    \item Let $X_{\mathrm{suffix}}$ be the stream consisting the remaining $\lfloor n/2 \rfloor$ points of $X$.
    \item Recursively-Fill-Most-Blocks$(\lfloor n/2 \rfloor, A_{\mathrm{empty}}, X_{\mathrm{suffix}})$.
  \end{enumerate}
\end{tcolorbox}

\begin{theorem}\label{full-alg}
  Let $(M, d)$ be a metric space and $X$ be an online stream of $n$ points in $M$.
  Let $A$ be an empty array of length $n$.
  The deterministic algorithm \textnormal{Recursively-Fill-Most-Blocks$(n, A, X)$} irrevocably places each point from $X$ in an empty cell of $A$, such that when all points have been placed, we have $c(A) \leq 52\sqrt n\OPT(X)$.
\end{theorem}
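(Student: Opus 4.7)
The plan is to proceed by induction on $n$, with the base case $n=0$ being trivial. For the inductive step, let $A_{0}$ denote the state of the array $A$ immediately after the call Fill-Most-Blocks$(n, A, X_{\mathrm{prefix}})$. By \Cref{half-alg}, $c(A_{0}) \le 11\sqrt{n}\,\OPT(X_{\mathrm{prefix}}) \le 11\sqrt{n}\,\OPT(X)$ and $G(A_{0}) \le 2\sqrt{n}$; the inequality $\OPT(X_{\mathrm{prefix}}) \le \OPT(X)$ is the monotonicity of $\OPT$ under taking subsets, which follows from the triangle inequality and is the same observation already used in the proof of \Cref{mst-subset}. The recursive call then fills the empty cells of $A_{0}$, viewing them collectively as a contiguous array $A_{\mathrm{empty}}$ of length $\lfloor n/2 \rfloor$; by the inductive hypothesis its cost satisfies $c(A_{\mathrm{empty}}) \le 52\sqrt{\lfloor n/2\rfloor}\,\OPT(X_{\mathrm{suffix}}) \le 26\sqrt{2}\,\sqrt{n}\,\OPT(X)$.

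The key step is relating $c(A)$ to $c(A_{0}) + c(A_{\mathrm{empty}})$. Each pair of adjacent cells in the final array $A$ falls into exactly one of three categories: (a) both cells filled by the Fill-Most-Blocks call, contributing to $c(A_{0})$; (b) both filled by the recursive call and consecutive within the same gap of $A_{0}$, contributing to $c(A_{\mathrm{empty}})$; or (c) a \emph{boundary} pair straddling the edge of a gap. Conversely, any pair adjacent in $A_{\mathrm{empty}}$ but spanning two different gaps of $A_{0}$ is not adjacent in the final $A$, so $c(A_{\mathrm{empty}})$ only ever overcounts the type-(b) contributions in $c(A)$. This gives $c(A) \le c(A_{0}) + c(A_{\mathrm{empty}}) + \Sigma_{\mathrm{bdy}}$, where $\Sigma_{\mathrm{bdy}}$ is the total cost of type-(c) pairs.

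Each gap contributes at most two boundary pairs (one on each side, when the gap is not at an endpoint of $A$), so there are at most $2G(A_{0}) \le 4\sqrt{n}$ type-(c) pairs. Each such pair has cost at most $\OPT(X)$, since a shortest walk through $X$ must traverse at least the distance between any two of its points. Hence $\Sigma_{\mathrm{bdy}} \le 4\sqrt{n}\,\OPT(X)$, and combining yields $c(A) \le (11 + 26\sqrt{2} + 4)\sqrt{n}\,\OPT(X) < 52\sqrt{n}\,\OPT(X)$, completing the induction.

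I expect the main obstacle to be the bookkeeping around gap boundaries: one must verify carefully that adjacency in the simulated array $A_{\mathrm{empty}}$ disagrees with adjacency in the real array $A$ only at the $O(\sqrt{n})$ gap endpoints, so that the discrepancy adds only $O(\sqrt{n}\,\OPT(X))$ extra cost rather than something unbounded. Once this decomposition is in place, the inductive step reduces to a sum of three $O(\sqrt{n}\,\OPT(X))$ contributions, where the $1/\sqrt{2}$ geometric decay in the recursive term is what lets a fixed constant ($52$) absorb the additive $\sqrt{n}$ overhead from \Cref{half-alg} and the boundary cost.
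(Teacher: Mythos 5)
Your proposal is correct and follows essentially the same route as the paper: induct on $n$, apply \Cref{half-alg} to the prefix, apply the inductive hypothesis to the simulated array $A_{\mathrm{empty}}$, and charge the at most $2G(A')\leq 4\sqrt n$ gap-boundary adjacencies at cost $\OPT(X)$ each. The only cosmetic difference is that you run the induction directly with the constant $52$ (checking $11+4+52/\sqrt 2<52$), whereas the paper proves the slightly sharper constant $15(2+\sqrt 2)$ and then rounds up to $52$.
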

\begin{proof}
  We will show a cost of at most $15(2+\sqrt 2)\sqrt n\OPT(X)$ using strong induction on $n$.
  The base case $n=0$ is trivial, so let us handle the inductive step.
  Let $A'$ denote the array $A$ after step~\labelcref{alg-step:fill-half}.
  Then $G(A') \leq 2\sqrt n$ and $c(A') \leq 11\sqrt n\OPT(X_{\mathrm{prefix}}) \leq 11\sqrt n\OPT(X)$ by~\Cref{half-alg} and the triangle inequality.
  By induction, we have $c(A_{\mathrm{empty}}) \leq 15(2+\sqrt 2)\sqrt{n/2}\OPT(X)$.
  The final cost $c(A)$ is the sum of $c(A')$, $c(A_{\mathrm{empty}})$, and the costs between neighboring cells in $A$ where exactly one of the cells was empty in $A'$.
  The latter is at most $2G(A')\OPT(X) \leq 4\sqrt n\OPT(X)$, as there are at most two such pairs of neighboring cells per gap in $A'$.
  The total cost becomes $c(A) \leq (11+15(2+\sqrt 2)/\sqrt 2 + 4)\sqrt n\OPT(X) = 15(2+\sqrt 2)\OPT(X)$.
\end{proof}

From~\Cref{full-alg}, we immediately get~\Cref{sqrt-n-alg,sqrt-n-alg-euclidean}.
We have thus generalized the optimal upper bound for online sorting~\cite{online-tsp} to online metric TSP, in particular improving the best-known upper bound for online TSP in $d$-dimensional Euclidean space.
In~\Cref{sec:optimality} we show, that our algorithm is optimal for both general online metric TSP, online TSP in $d$-dimensional Euclidean space, and more.

\section{Optimality}%
\label{sec:optimality}

It follows directly from a known lower bound for online sorting~\cite{online-packing,online-tsp} that our algorithm is optimal.
Recall that online sorting is equivalent to online TSP in 1-dimensional Euclidean space.

\begin{theorem}[Theorem 1 in~\cite{online-tsp}]\label{known-lb}
  The (deterministic and randomized) competitive ratio of online TSP in the Euclidean unit interval $[0, 1]$ is $\Omega(\sqrt n)$.
\end{theorem}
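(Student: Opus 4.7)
By Yao's minimax principle, it suffices to construct a random input under which every deterministic online algorithm has expected cost $\Omega(\sqrt n)$. I would restrict attention to inputs in $\{0,1\}^n \subset [0,1]^n$: then $\OPT(X)\le 1$ (witnessed by the walk from $0$ to $1$), and the algorithm's final cost equals the number of indices $i$ with $A[i]\neq A[i+1]$, which I call \emph{transitions}. So the theorem reduces to designing a $\{0,1\}$-valued distribution forcing $\Omega(\sqrt n)$ expected transitions.

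The plan is to split the stream into $\sqrt n$ \emph{phases} of $\sqrt n$ points each, where within a single phase all points share a common value drawn uniformly and independently from $\{0,1\}$. Call a maximal empty run of $A$ \emph{safe for} $v\in\{0,1\}$ if filling it with value-$v$ points creates no new transition, i.e., every non-empty neighbor of the run either has value $v$ or does not exist. The key observation is that to absorb an entire all-$v$ phase without producing any new transition, the algorithm must place all $\sqrt n$ points of the phase into runs that are safe for $v$.

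The core claim would be that in expectation each phase forces $\Omega(1)$ new transitions. The proof should track, as a potential, the pair $(s_0,s_1)$ of total safe-for-$v$ capacities, together with the number of transitions already created. Empty cells partition into safe-for-$0$, safe-for-$1$, and ``neutral'' runs (those bordered by opposite-value blocks); one has $s_0+s_1+s_{\mathrm{neutral}}$ equal to the total number of empty cells, and every neutral run certifies a ``pending'' transition that must materialize once it is filled. Combined with the randomness of the phase value, this yields that with constant probability the chosen value has safe capacity less than $\sqrt n$, forcing at least one fresh transition in that phase, and summing over $\sqrt n$ phases gives the $\Omega(\sqrt n)$ bound.

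The main obstacle will be the intra-phase analysis: across the $\sqrt n$ placements inside a single phase, the online algorithm can reshape the safe-run structure, for instance by partially filling a safe-for-$v$ run to ``convert'' surrounding empty regions into a different safety type. Controlling this requires a Lyapunov-style bound on the one-step change in $(s_0,s_1)$ during the phase, so that the adversary's per-phase advantage does not leak away before the phase is over. A purely deterministic lower bound would then follow by replacing the uniformly random phase value with an adaptive choice of the $v$ minimizing the current $s_v$, which sidesteps Yao's principle and yields the deterministic $\Omega(\sqrt n)$ bound directly.
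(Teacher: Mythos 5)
Your very first reduction is fatal: restricting to inputs from $\{0,1\}$ cannot yield a super-constant lower bound, because online sorting with only two distinct values is solvable with cost at most $\OPT$. Consider the algorithm that places every arriving $0$ into the left-most empty cell and every arriving $1$ into the right-most empty cell (more generally, first-seen value from the left, second-seen value from the right). On any binary input the final array is a block of $0$s followed by a block of $1$s, so it has at most one transition, i.e.\ cost at most $1=\OPT$ whenever both values occur, and cost $0$ otherwise. In particular your core claim that ``in expectation each phase forces $\Omega(1)$ new transitions'' is false for this algorithm: after any number of phases the empty cells form one contiguous run bordered by $0$s on the left and $1$s on the right, and an all-$0$ (resp.\ all-$1$) phase is absorbed at the left (resp.\ right) end with no new transition, even though that run is ``neutral'' in your classification. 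A neutral run certifies at most one eventual transition in total, not one per phase, so the whole potential/phase argument caps out at $O(1)$, and the same two-ended strategy also defeats your adaptive-adversary variant. The issue is structural, not a fixable bookkeeping detail: any lower-bound construction must use many distinct values, since with few values the algorithm can reserve one contiguous region per value.

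The actual proof (Theorem 1 in~\cite{online-tsp}, whose adapted skeleton appears in \Cref{app:lb-details}) differs in exactly the two places your construction is weak, and \Cref{generalized-lb} isolates the properties it needs. Each ``epoch'' consists of $\sqrt n$ \emph{distinct} evenly spaced points $\{0,1/\sqrt n,\ldots,(\sqrt n-1)/\sqrt n\}$, so that any placement adjacent to a cell filled in the same epoch costs at least $1/\sqrt n$ (this is what your identical-valued phases destroy); and at a random time the adversary switches to flooding all remaining cells with copies of an extreme point, so that every gap the algorithm has kept open costs $\Omega(1)$ to close (your construction has no analogue of this, since a gap bordered by $0$s can be filled by $0$s for free). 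One then tracks the number of gaps $G(T)$, splits configurations into low-gap and high-gap classes, and sets up the recursion on $\Low(i)$, $\High(i)$ shown in the appendix: low-gap configurations pay $\Omega(\sqrt n \cdot 1/\sqrt n)=\Omega(1)$ per epoch through forced ``attachments'' to same-epoch points, while high-gap configurations pay in expectation through the random flood. Summing over $\sqrt n$ epochs gives $\Omega(\sqrt n)$ expected cost against the oblivious distribution, and Yao's principle (which you correctly invoke, and which is the one ingredient your proposal shares with the real proof) transfers this to randomized algorithms.
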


In particular, this gives a lower bound of $\Omega(\sqrt n)$ for general online metric TSP, showing that our $O(\sqrt n)$ algorithm Recursively-Fill-Most-Blocks from~\Cref{sec:our-algorithm} is optimal.
Perhaps surprisingly, online metric TSP is no harder than online sorting.

Since the Euclidean unit interval lies inside $d$-dimensional Euclidean space,~\Cref{known-lb} also shows that our $O(\sqrt n)$ algorithm is optimal for online TSP in $d$-dimensional Euclidean space.
We have thus closed the gap for this problem, improving upon the best known upper bound of $O(2^{d}\sqrt{dn \log n})$~\cite{online-tsp}, notably removing the dependence on dimension.

Studying the proof of the above lower bound, we find the following generalization.
Intuitively, the generalization gives us a lower bound of $\Omega(\sqrt n)$ for online TSP in any metric space where we can draw a straight line segment (of length $\ell$ with endpoints $a_{0}$ and $a_{1}$) and pick $m$ evenly spaced points along it.

\begin{corollary}\label{generalized-lb}
  Let $(M, d)$ be a metric space, such that for every $m \in \N$, there exist two points $a_{0}, a_{1} \in M$, and a set $X \subseteq M$ of $m$ points, such that, for $\ell = \OPT(X \cup \{a_{0}, a_{1}\})$,
  \begin{itemize}
    \item $d(x,y) \geq \ell/m$ for all distinct $x, y \in X$, and
    \item $d(a_{0}, x) + d(x, a_{1}) \geq \ell$ for all $x \in X$.
  \end{itemize}
  Then the competitive ratio of online TSP in $(M, d)$ is $\Omega(\sqrt n)$.
\end{corollary}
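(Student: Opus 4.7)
The plan is to adapt the proof of Theorem \ref{known-lb} to this more general setting. The original lower bound on the Euclidean unit interval $[0,1]$ is proved by constructing an adversarial stream of $n$ points chosen from a fixed set of $m+2$ evenly spaced points $\{0,1/(m+1),\ldots,1\}$ with $m=\Theta(\sqrt n)$. The two structural properties of this point set that drive the analysis are exactly those abstracted in the corollary: any two of the $m$ interior points are at distance $\Omega(1/m)$, and every interior point $x$ satisfies $|0-x|+|x-1|=1$, which together keep OPT bounded while forcing the algorithm to pay $\Omega(\sqrt n)$ for each unit of OPT. The plan is to verify that the original proof appeals to nothing beyond these two properties, and then transplant the argument.

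Concretely, given $n$, first set $m=\lceil\sqrt n\rceil$ and invoke the hypothesis to obtain $a_0, a_1 \in M$ and $X \subseteq M$ with $|X|=m$ and corresponding value $\ell$. Fix any bijection between $\{a_0\}\cup X\cup\{a_1\}$ and the $m+2$ evenly spaced points of $[0,1]$, sending $a_0\mapsto 0$ and $a_1\mapsto 1$. Then simulate the adversary from the proof of Theorem \ref{known-lb}: whenever it would reveal a point $p$ of $[0,1]$, instead reveal the abstract point corresponding to $p$. The OPT of the resulting stream is at most $\ell$, since every revealed point lies in $X\cup\{a_0,a_1\}$ and this set is covered by a walk of length $\OPT(X\cup\{a_0,a_1\})=\ell$ (repetitions cost nothing). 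For the algorithm's cost, one reruns the original analysis, substituting $d(x,y)\geq\ell/m$ for $|p-q|\geq 1/(m+1)$ in the ``nearby'' cases, and $d(a_0,x)+d(x,a_1)\geq\ell$ for $|0-p|+|p-1|=1$ in the ``spanning'' cases. This yields a lower bound of $\Omega(\sqrt n\cdot\ell)$ on the algorithm's cost and hence an $\Omega(\sqrt n)$ competitive ratio.

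The main obstacle will be certifying that the argument in Theorem \ref{known-lb} truly depends on nothing beyond the two abstract properties, and in particular does not silently use the linear order on $[0,1]$, the continuity of the ambient space, or the existence of points strictly between two given ones. If the original adversary adaptively refines an interval by choosing midpoints or otherwise exploits betweenness, the analysis must be rephrased so that the adversary commits up front to the finite alphabet $\{a_0\}\cup X\cup\{a_1\}$, and its adaptive choices become a choice among these $m+2$ abstract points. Checking this is essentially a careful bookkeeping exercise, but it is where all the substance of the corollary lies; everything else is a change of notation.
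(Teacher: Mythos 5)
Your proposal matches the paper's proof: the paper likewise establishes the corollary by revisiting the lower-bound proof of \Cref{known-lb} in~\cite{online-tsp}, observing that it only uses the points $a_0$, $a_1$ and the evenly spaced set $X$ through the two abstracted properties (with $m = \sqrt n$, $\ell = 1$), and noting that the argument scales by $\ell$. Your explicit bijection/simulation framing and the caution about not silently using the linear order of $[0,1]$ are just a more careful phrasing of the same verification the paper dismisses as ``easily checked.''
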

\begin{proof}
  This follows from the proof of~\Cref{known-lb} presented in Section 2 of~\cite{online-tsp}.
  They work in the Euclidean unit interval, but only use the points $a_{0}=0$, $a_{1}=1$, and the set of points $X = \{0, 1/\sqrt n, \ldots, (\sqrt n-1)/\sqrt n\}$.
  Ignoring the concrete values of the points $a_{0}$, $a_{1}$, and those in $X$, it can easily be checked, that they only use the properties stated above, where $m = \sqrt n$ and $\ell = 1$.
  The proof also follows through for any other $\ell > 0$, where all distances in their proof simply scale by $\ell$.
\end{proof}

This tells us that our algorithm is also optimal for online TSP in e.g.\ (subgroups of) normed vector spaces and Riemannian manifolds.

\subsection{Impossibility of optimality for every fixed metric space}

We have seen, that our $O(\sqrt n)$ algorithm is optimal for general online metric TSP, as well as for online TSP in many fixed metric spaces.
It is not optimal for every fixed metric space, though, as an algorithm with competitive ratio $O(\log n)$ is known for the uniform metric~\cite{online-tsp}.
Motivated by this gap, one might ask whether there exists a stronger algorithm, which optimally solves online TSP in $(M, d)$, for every fixed metric space $(M, d)$.
In this subsection, we show that no such algorithm exists.
This hints, that the weaker optimality of our algorithm is the best one can hope for.

Specifically, we show that no algorithm is optimal both for the uniform metric and for a general metric space.
This is the content of~\Cref{no-log-n-and-sqrt-n}, which we restate and prove below.

\impossibilityResult*
\begin{proof}
  Assume for the sake of contradiction that such an algorithm $\A$ exists.
  Let us first informally explain our basic idea.
  Let $U$ be a set of $n^{4/5}$ points with pairwise distance 1, and let $x$ be a point with distance $n^{4/5}$ to every point in $U$.
  Consider an input consisting $n^{1/5}$ consecutive copies of $U$.
  Then the optimal offline cost is $\OPT(U) = |U|-1 = n^{4/5}-1$.
  The only way for $\A$ to match this optimal cost, would be to maintain $\Omega(|U|)=\Omega(n^{4/5})$ gaps until completion.
  Similarly, since $\A$ actually must obtain nearly-optimal cost $O(n^{4/5}\log n)$, it must have at least $n^{3/5}$ gaps at some point.
  But when this happens, we can trick $\A$ by changing the remaining input to be copies of $x$, filling up all these gaps.
  Then the cost becomes at least $n^{3/5}n^{4/5}$ which breaks the promise of competitive ratio $O(\sqrt n)$, since $\OPT(U \cup \{x\}) = 2n^{4/5}-1$.

  We now formally prove the \lcnamecref{no-log-n-and-sqrt-n}.
  Let $\X$ be the random input served by Oblivious-Random-Adversary$(n)$, which we define below.

  \begin{tcolorbox}[beforeafter skip=10pt]
    \textbf{Oblivious-Random-Adversary($n$)}
    \begin{enumerate}
      \item Let $U$ be a set of $n^{4/5}$ points with pairwise distance 1.
      \item Let $x$ be a point with distance $n^{4/5}$ to every point in $U$.
      \item While less than $n$ points have been served:
            \begin{enumerate}
              \item With probability $n^{-3/5}$:
                    \begin{enumerate}
                      \item Let $m$ be the number of points served.
                      \item Serve $n-m$ copies of $x$.
                    \end{enumerate}
              \item Otherwise:
                    \begin{enumerate}
                      \item\label{epoch} Serve a copy of the points in $U$.
                    \end{enumerate}
            \end{enumerate}
    \end{enumerate}
  \end{tcolorbox}

  By closely following the proof of~\Cref{known-lb}, replacing their random input by $\X$, we get that the expected cost of any deterministic algorithm on $\X$ is $\Omega(n)$.
  We provide a full proof in~\Cref{app:lb-details}, restating this claim as~\Cref{impossibility-lb}.

  Let $\A(\X)$ denote the cost of $\A$ on $\X$.
  Considering $\A$ as a random variable over deterministic algorithms, we get $\E[\A(\X)] \in \Omega(n)$.
  We will now derive an upper bound contradicting this lower bound.
  To utilize our assumed competitive ratio for the uniform metric, note that $\X$ follows the uniform metric when $x \not\in \X$.
  From the definition of Oblivious-Random-Adversary, we have $\Pr[x \in \X] = 1 - \Pr[x \not\in \X] = 1 - (1-n^{-3/5})^{n^{1/5}} \leq n^{-2/5}$, using Bernoulli's inequality.
  Using this bound, we get
  \begin{align*}
    \E[\A(\X)]
     & = \Pr[x \in \X]\E[\A(\X) \mid x \in \X] + \Pr[x \not\in \X]\E[\A(\X) \mid x \not\in \X] \\
     & \leq n^{-2/5}\E[\A(\X) \mid x \in \X] + \E[\A(\X) \mid x \not\in \X].
  \end{align*}
  By assumption, $\A$ has competitive ratio $O(\sqrt n)$ in general, and competitive ratio $O(\log n)$ when $x$ is not in the input.
  It's easy to see that $\OPT(\X) \in O(n^{4/5})$, so we have $\E[\A(\X) \mid x \in \X] \in O(n^{4/5}\sqrt n)$ and $\E[\A(\X) \mid x \not\in \X] \in O(n^{4/5}\log n)$.
  In total, $\E[\A(\X)] \in O(n^{-2/5}n^{4/5}\sqrt{n} + n^{4/5}\log n) \subseteq o(n)$, contradicting $\E[\A(\X)] \in \Omega(n)$.
\end{proof}

\section{Open questions}%
\label{sec:open-questions}

Both~\cite{online-packing} and~\cite{online-tsp} additionally study a variant of the problem with extra space.
In this variant, the array is of length $\gamma n$ for some $\gamma > 1$, and empty cells are ignored in the cost function.
A significant gap between the best-known upper and lower bound remains, even for online sorting of reals with extra space~\cite{online-packing}.
For the uniform metric, a tight $\Theta(1+\log(\gamma/(\gamma-1)))$ competitive ratio is known~\cite{online-tsp}.
We repeat it as an open question to tighten the gap for online sorting of reals with extra space, and suggest the introduction of an algorithm for online metric TSP with extra space.

\section{Acknowledgments}
I am grateful to Anders Aamand, Mikkel Abrahamsen, Théo Fabris, Jonas Klausen, and Mikkel Thorup for helpful and inspiring conversations.
I would also like to thank the anonymous ESA 2025 reviewers for their valuable comments.

\bibliography{online-metric-tsp.bib}

\newpage
\appendix
\section{Deferred proofs}

\subsection{Deferred bound on proxy}%
\label{app:proxy-details}

We restate and prove~\Cref{2-approx}.

\proxy*
\begin{proof}
  Recall that $\OPT(X)$ is the length of some (shortest) walk covering all points in $X$, so in particular it is the total weight of some connected multigraph on $X$.
  Since $\MST(X)$ is the minimum total weight of a connected graph on $X$, we have $\MST(X) \leq \OPT(X)$.
  To show the remaining inequality, consider walking along the face created by placing a minimum spanning tree in the plane.
  This gives a walk of length $2\MST(X)$ covering $X$.
\end{proof}

\subsection{Deferred lower bound}%
\label{app:lb-details}

We prove the following \namecref{impossibility-lb}, which is applied in the proof of~\Cref{no-log-n-and-sqrt-n}.

\begin{lemma}\label{impossibility-lb}
  Let $\A$ be a deterministic algorithm for online metric TSP.\@
  Let $\X$ be the random input served by Oblivious-Random-Adversary$(n)$.
  Then $\E[\A(\X)] \in \Omega(n)$.
\end{lemma}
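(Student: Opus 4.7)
The plan is to decompose $\E[\A(\X)]$ based on the number $K$ of $U$-epochs served before the adversary switches (with $K = n^{1/5}$ corresponding to no switch), and to lower bound the conditional cost in each case. For each $k$, let $B_k$ denote the number of maximal contiguous non-empty subarrays (``blocks'') of $A$ after $\A$ has placed the points from the first $k$ epochs of $U$; this quantity is deterministic conditional on $K \geq k$. Set $S = \sum_{k=0}^{n^{1/5}-1} B_k$. The switch case and no-switch case will give complementary contributions that together force $\E[\A(\X)] = \Omega(n)$.

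First, conditional on switching at iteration $k+1$, the remaining $n - k \cdot n^{4/5}$ copies of $x$ fill the empty cells of $A$, and each of the $B_k$ existing $U$-blocks creates at least two (up to edge corrections) $U$-to-$x$ adjacencies, each of length $n^{4/5}$. So $\E[\A(\X) \mid K = k] \geq (2B_k - 2)\, n^{4/5}$. Using Bernoulli's inequality to lower bound $\Pr[K = k] \geq (1 - n^{-2/5})\, n^{-3/5}$ for $1 \leq k < n^{1/5}$, the total switch contribution to $\E[\A(\X)]$ is at least $2 n^{1/5}(1 - o(1))(S - n^{1/5})$.

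In the no-switch case, the array is fully filled with $U$-values, so the cost equals $(n - 1) - T$, where $T$ is the number of same-value adjacent pairs in the final array. The crucial step is to prove $T \leq 4 S$. Within epoch $k+1$, each $v \in U$ is served exactly once; matching opportunities for $v$ (empty cells adjacent to an existing $v$-cell) are created only when new $v$-cells are placed, so no intra-epoch placement can create a new $v$-opportunity usable during the same epoch. Hence every matching contribution to $T$ incurred during epoch $k+1$ uses an opportunity that existed at the start of the epoch, and the total number of such opportunities is at most $4 B_k$: there are at most $2 B_k$ empty cells adjacent to a non-empty cell (at most two boundary empty cells per block), each adjacent to at most two non-empty neighbors. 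Summing over $k$ gives $T \leq 4 S$, so the no-switch contribution to $\E[\A(\X)]$ is at least $(1 - n^{-2/5})(n - 1 - 4 S)$.

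Combining the two contributions yields $\E[\A(\X)] \geq (1 - o(1))\bigl[n + S\,(2 n^{1/5} - 4) - O(n^{2/5})\bigr]$. For $n \geq 32$ the coefficient of $S$ is non-negative, so $\E[\A(\X)] \geq (1 - o(1))\, n = \Omega(n)$ regardless of the algorithm's choice of $(B_k)_k$. The main obstacle is proving $T \leq 4 S$: the trivial bound $T \leq n - |U|$ only forces cost $\geq \OPT \approx n^{4/5}$, which is too weak by a factor of $n^{1/5}$. The crucial insight is the ``one-copy-per-epoch'' observation, which prevents an algorithm from creating new matching opportunities intra-epoch and thereby ties the matching count tightly to the block structure at epoch boundaries.
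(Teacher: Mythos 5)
Your proof is correct, but it is organized quite differently from the paper's. The paper follows the structure of Theorem 1 of the cited work almost verbatim: it classifies partial arrays as $\Low$/$\High$ according to the number of gaps (threshold $n^{4/5}/8$), proves a per-epoch cost lower bound for staying in $\Low$, a gap-filling cost bound for the switch, and then runs a two-state recursion $\Low(i),\High(i)$ with an induction to get $\Omega(n)$. You instead decompose $\E[\A(\X)]$ directly over the switch time $K$, introduce the single global quantity $S=\sum_k B_k$ (your block counts differ from the paper's gap counts $G$ by at most one, so this is the same statistic), lower-bound the switch contribution by roughly $2n^{1/5}S$ via the $U$-to-$x$ adjacencies of cost $n^{4/5}$, and lower-bound the no-switch cost by $n-1-4S$ via the combinatorial bound $T\leq 4S$ on zero-cost adjacencies. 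That bound rests on exactly the same core insight as the paper's adapted Lemma~9 (one copy of each $v\in U$ per epoch, so a zero-cost placement must attach to a cell occupied before the epoch, and there are only $O(B_k)$ such boundary positions), but your aggregation replaces the $\Low$/$\High$ recursion and induction with a one-shot linear combination, which is shorter and more self-contained; the paper's route buys a near-verbatim reuse of the known argument. Two small bookkeeping points you should patch in a full writeup: the terms $(2B_k-2)n^{4/5}$ can be negative (e.g.\ $k=0$) and $n-1-4S$ can be negative when $S$ is large, and in those regimes multiplying by a \emph{lower} bound on the relevant probability does not give a lower bound on the contribution; this is fixed either by replacing the bounds with $\max\{0,\cdot\}$ and absorbing the $O(n^{1/5})$ loss, or more cleanly by splitting into the two cases $S\leq n/8$ (where the no-switch term alone gives $\Omega(n)$) and $S>n/8$ (where some $B_k>n^{4/5}/8$ and that single switch event already contributes $\Omega(n)$). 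With that adjustment your argument goes through and yields the stated $\E[\A(\X)]\in\Omega(n)$.
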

\begin{proof}
  Our proof follows the proof of Theorem 1 in~\cite{online-tsp} (restated as~\Cref{known-lb} in this paper) almost verbatim, with slight adjustments to accommodate our context.
  We replace their online sorting problem by online metric TSP, and replace their random input by $\X$.
  To accommodate this, we modify many of the quantities in their proof, but all the same arguments still work.

  Let us introduce the necessary notation and terminology.
  For a partially filled size-$n$ array $T$, let $|T|$ denote the number of non-empty cells in $T$.
  Let $\A(T)$ denote the final array produced by $\A$ when the remaining $n-|T|$ input points are served by Oblivious-Random-Adversary$(n)$.
  Define $f(T) = c(\A(T))-c(T)$, representing the cost of filling $T$.
  Then $f(T) = 0$ if $|T|=n$, and our goal is to show $\E[f(\emptyset)] \in \Omega(n)$, using $\emptyset$ for the empty size-$n$ array.

  We say that an array $T_{2}$ can be \emph{obtained from} an array $T_{1}$ if $T_{2}[i] = T_{1}[i]$ for all indices where $T_{1}[i]$ is non-empty.
  For such a pair of arrays, we write $c(T_{1}, T_{2}) = c(T_{2})-c(T_{1})$, representing the cost of transforming $T_{1}$ into $T_{2}$.

  In the input sequence, we refer to one copy of $U$ as an \emph{epoch} (as served in step~\labelcref{epoch}).
  Let $\T(T)$ be the set of mappings that can be obtained by inserting an epoch into $T$.
  That is, $\T(T)$ contains all mappings $T'$ which can be obtained from $T$ where the difference between $T'$ and $T$ corresponds to the elements of an epoch.
  Note that $|T'| = |T|+\sqrt n$ for $T' \in \T(T)$.

  Let $\Low = \{T \mid G(T) \leq n^{4/5}/8\}$ and $\High = \{T \mid G(T) > n^{4/5}/8\}$ be the sets of all partially filled arrays with a low/high number of gaps.

  \begin{claim*}[Adapted from Lemma 9 in~\cite{online-tsp}]
    Let $T_{1}, T_{2} \in \Low$ with $T_{2} \in \T(T_{1})$.
    Then $c(T_{1},T_{2}) \geq 3n^{4/5}/16$.
  \end{claim*}
  \begin{claimproof}
    If an element is placed between two empty cells, the number of gaps will increase by one.
    If it is placed next to one or two occupied cells, the number of gaps stays constant or is reduced by one---call such an insertion an attachment.
    As $G(T_{2}) \leq G(T_{1})+n^{4/5}/8$, at least $7n^{4/5}/16$ of the $n^{4/5}$ insertions of the epoch must be attachments.

    An attachment incurs cost at least 1 unless the point(s) in the neighboring cell(s) and the newly inserted point are identical.
    As all points withing an epoch are distinct, an attachment can only be without cost if the neighboring cell was occupied in $T_{1}$ (that is, before the epoch).
    As $G(T_{1}) \leq n^{4/5}/8$, at most $n^{4/5}/4$ occupied cells border an empty cell.
    The remaining $n^{4/5}(7/16 - 1/4)=3n^{4/5}/16$ attachments will incur non-zero cost.
  \end{claimproof}

  As a shorthand, let $(T+x)$ be the arrays obtained by filling all gaps in $T$ with $x$s, where $x$ is the point defined in Oblivious-Random-Adversary.

  \begin{claim*}[Adapted from Lemma 10 in~\cite{online-tsp}]
    Let $T$ be a partially filled array not containing $x$.
    Then $c(T, (T+x)) \geq G(T)n^{4/5}$.
  \end{claim*}
  \begin{claimproof}
    Each gap in $T$ is bordered by at least one non-empty cell $A[i]$.
    Since $A[i] \in U$, we have $d(A[i], x) = n^{4/5}$.
  \end{claimproof}

  For a partially filled array $T$, the expected cost of filling $T$ can be bounded from below by
  \[
    \E[f(T)] \geq n^{-3/5}c(T, (T+x)) + (1-n^{-3/5})\min_{T'\in\T(T)}(c(T,T')+\E[f(T')]),
  \]
  and by the second claim, we thus have
  \begin{equation}\label{fill-lb}
    \E[f(T)] \geq G(T)n^{1/5} + (1-n^{-3/5})\min_{T'\in\T(T)}(c(T,T')+\E[f(T')]).
  \end{equation}

  Let $\Low(i) = \min\{\E[f(T)] \mid T \in \Low, |T| = n-in^{4/5}\}$ and $\High(i) = \min\{\E[f(T)] \mid T \in \High, |T| = n-in^{4/5}\}$ be the minimum expected cost of filling any array with $in^{4/5}$ empty cells, and which contains a low/high number of gaps, for $i \in \{0,\ldots,n^{1/5}\}$, respectively $i \in \{0,\ldots,n^{1/5}-1\}$.
  (Note that $\High(n^{1/5})$ is undefined as an empty array cannot have a high number of gaps.)

  Combining~\eqref{fill-lb} with the first claim, we obtain
  \begin{align*}
    \Low(i) &\geq (1-n^{-3/5})\min\{3n^{4/5}/16 + \Low(i-1), \High(i-1)\},\\
    \High(i) &\geq n^{4/5}/8 + (1-n^{-3/5})\min\{\Low(i-1), \High(i-1)\},
  \end{align*}
  where $\Low(0)=\High(0)=0$.
  Our next claim, leads to the lower bound.

  \begin{claim*}[Adapted from Lemma 11 in~\cite{online-tsp}]
    $\Low(n^{1/5}) \in \Omega(n)$.
  \end{claim*}
  \begin{claimproof}
    We proceed by strong induction on $i$.
    For all $i' \in \{0, \ldots, i-1\}$ assume
    \begin{align*}
      \High(i') &\geq i'n^{4/5}(1-n^{-3/5})^{i'}/16,\\
      \Low(i') &\geq (i'-1)n^{4/5}(1-n^{-3/5})^{i'}/16.
    \end{align*}
    Then
    \begin{align*}
      \High(i)
      &\geq n^{4/5}/8 + (1-n^{-3/5})\min\{\Low(i-1), \High(i-1)\}\\
      &\geq n^{4/5}/8 + (i-2)n^{4/5}(1-n^{-3/5})^{i}/16\\
      &\geq in^{4/5}(1-n^{-3/5})^{i}/16,
    \end{align*}
    and
    \begin{align*}
      \Low(i)
      &\geq (1-n^{-3/5})\min\{3n^{4/5}/16 + \Low(i-1), \High(i-1)\}\\
      &\geq (1-n^{-3/5})\min\{3 + (i-2)(1-n^{-3/5})^{i-1}, (i-1)(1-n^{-3/5})^{i-1}\}n^{4/5}/16\\
      &\geq (i-1)n^{4/5}(1-n^{-3/5})^{i}/16.
    \end{align*}
    Finally, note that $(1-n^{-3/5})^{n^{1/5}} \geq 0.3$ for $n \geq 3$, proving the claim.
  \end{claimproof}

  As the empty array $\emptyset$ is contained in $\Low$, we have $\E[f(\emptyset)] \geq \Low(n^{1/5}) \in \Omega(n)$.
\end{proof}

In~\cite{online-tsp}, they finish their proof by applying Yao's minimax principle~\cite[Proposition 2.6]{ra}.
In our case, this would give us a lower bound of $\Omega(n^{1/5})$ on the competitive ratio for online metric TSP, since $\OPT(\X) \in O(n^{4/5})$.
While this is not directly interesting to us,~\Cref{impossibility-lb} helps us obtain a contradiction in the proof of~\Cref{no-log-n-and-sqrt-n}.

\end{document}